\documentclass[journal]{IEEEtran} 


\usepackage[pdftex]{graphicx}
\usepackage{algorithm}
\usepackage[noend]{algpseudocode}
\usepackage{booktabs}
\usepackage{amssymb}
\usepackage{amsthm,xpatch}
\usepackage{blkarray}
\usepackage{multirow} 
\usepackage[cmex10]{amsmath}
\usepackage{subcaption}
\usepackage{lipsum}
\usepackage{hyperref}
\usepackage{xcolor}
\usepackage{bbm}
\usepackage{gensymb}
\usepackage[noadjust]{cite}

\newtheorem{theorem}{Theorem}

\makeatletter
\xpatchcmd{\proof}{\topsep6\p@\@plus6\p@\relax}{}{}{}
\makeatother

\begin{document}
	
	\title{Deep Learning based Coverage and Rate Manifold Estimation in Cellular Networks}

	\author{Washim Uddin Mondal, Praful D. Mankar, Goutam Das, Vaneet Aggarwal, and Satish V. Ukkusuri\thanks{W. U. Mondal is with the Schools of CE and IE, Purdue University, West Lafayette, IN 47907, USA, email: wmondal@purdue.edu. P. D. Mankar is with the Signal Processing and Communication Research Center, IIIT, Hyderabad, India 500032, email: praful.mankar@iiit.ac.in. G. Das is with the G. S. Sanyal School of Telecommunications, IIT Kharagpur, India 721302, email: gdas@gssst.iitkgp.ac.in. V. Aggarwal is with the Schools of IE and ECE, Purdue University, West Lafayette, IN 47907, USA, email:  vaneet@purdue.edu. S. V. Ukkusuri is with the School of CE,  Purdue University, West Lafayette, IN 47907, USA, email: sukkusur@purdue.edu.}
	}
	
	\maketitle
	
	\begin{abstract}
		This article proposes Convolutional Neural Network based Auto Encoder (CNN-AE) to predict location dependent rate and coverage probability of a network from its topology. We train the CNN utilising BS location data of India, Brazil, Germany and the USA and compare its  performance  with stochastic geometry (SG) based analytical models. In comparison to the best-fitted SG-based model, CNN-AE improves the coverage and rate prediction errors by a margin of as large as $40\%$ and $25\%$  respectively. As an application, we propose a low complexity, provably convergent algorithm that, using trained CNN-AE, can compute locations of new BSs that need to be deployed in a network in order to satisfy pre-defined spatially heterogeneous performance goals.
	\end{abstract}

	\begin{IEEEkeywords}
		Network Performance Prediction, Convolutional Neural Network, Stochastic Geometry, Network Design   
	\end{IEEEkeywords}
	
	\IEEEpeerreviewmaketitle
	
	\section{Introduction} 
	The relationship between the topology of a network and its  performance is one of the most important questions in cellular network industry \cite{lu2021stochastic}. An answer to this question will not only allow us to evaluate the performance of existing networks but also provide a way to design smart networks for the future with enhanced performance. With rapid development of augmented and virtual reality, video streaming, network gaming and other data-hungry applications, the issue of smart design has become pertinent for the forthcoming fifth generation (5G) and beyond networks \cite{wijethilaka2021survey}. There are two major approaches in the literature, namely simulation and stochastic geometry, that are applied to map a cellular network topology to its performance. \hspace{1cm}
	
	\subsection{Simulation}
	In a simulation based setup, the precise locations of the base stations (BSs) are fed into a \textit{simulator} along with a statistical model of the channel gains. The simulator then calculates the Signal-to-Interference-plus-noise ratio (SINR) at each possible user location for a given realisation of the BS-to-user channels. 
	The task is then repeated sufficiently large number of times to generate a \textit{user location-dependent} probabilistic description of a desired performance metric \cite{ozcan2020robust}.
	For example, one can obtain location-specific coverage probability (i.e., the probability that the SINR at a given location surpasses a predefined threshold) and  location-specific average rates via this method.
	
	Simulation provides an accurate description of network reality. However, the simulation process is quite computationally expensive and therefore, time-consuming, especially where the number of BSs and potential user locations are high \cite{sabbah2018emulation}. Thus, simulation cannot be efficiently used for network design where we must simulate large number of network scenarios to choose the best performing network.
	
	\subsection{Stochastic Geometry}
	
	Stochastic Geometry (SG) is an alternate tool to simulation that analytically evaluates the network performance. One of the core presumptions of SG-based models is that the locations of the users and BSs in a network can be described as realizations of \textit{stationary two-dimensional (2D) random point processes}. A 2D point process is essentially a collection of random points in an infinite 2D area such that their locations are determined by a given probability law. For example, 2D Poison Point Process (PPP) was one of the earliest models for BS and user locations \cite{andrews2011tractable}. It is defined as a collection of random points in a 2D plane such that the number of points contained in two disjoint areas can be depicted as two independent Poisson random variables. On the other hand, a 2D random process is said to be stationary if translation of each random point by a fixed vector does not change the probabilistic description of the process \cite{chen2018modeling}. PPP is an example of a stationary point process.
	
	An advantage of a stationary process is that it appears to be \textit{statistically} identical from the perspective of \textit{every} arbitrarily chosen points in the plane. Hence, if BS locations are depicted by a stationary process, then in the ensemble of all possible BS location realisations, the average network performance must be identical at each point in the plane. The appeal of the SG-based models is that they yield closed-form expressions for location-independent ensemble-averaged network performance. \hspace{1cm}
	
	However, there are two major pitfalls to this approach. First, in a network design scenario, one is typically interested in the performance of a given network realization rather than the that of their ensemble. In a given network realization, different user may experience different network performance. Moreover, due to various socio-economic, and demographic reasons, different user locations may have different network design requirement. For example, the highly populated areas of a city may require higher coverage probability whereas the low-populated places may need lower coverage \cite{li2018stochastic}. It is clear that to design networks with such spatially heterogeneous goals, we must have access to location-specific network performance and hence, location-independent SG-based average performances are not useful for this purpose.

	Secondly, the assumed BS location models in the SG-based analyses may not always be realistic. For example, PPP allows the BSs to be placed arbitrarily close to each other. Repulsion based models \cite{deng2014ginibre} were later introduced to avoid this drawback. Clustered BS process-based analyses \cite{afshang2018poisson} also appeared in the literature to mimic the clustered real networks. Comparing the fitted error of a pool of candidate point processes, it was later established \cite{li2018stochastic} that the $\alpha-$Stable process best depicts  the BS deployment in urban scenarios. All of these refinements added realism to the model, however, at the expense of increasingly complex average performance analysis. Also, with increase in complexity, the estimation process of different parameters of the models from the real data became challenging.
	
	\begin{figure*}[h!]
		\centering
		\includegraphics[width=.9\linewidth]{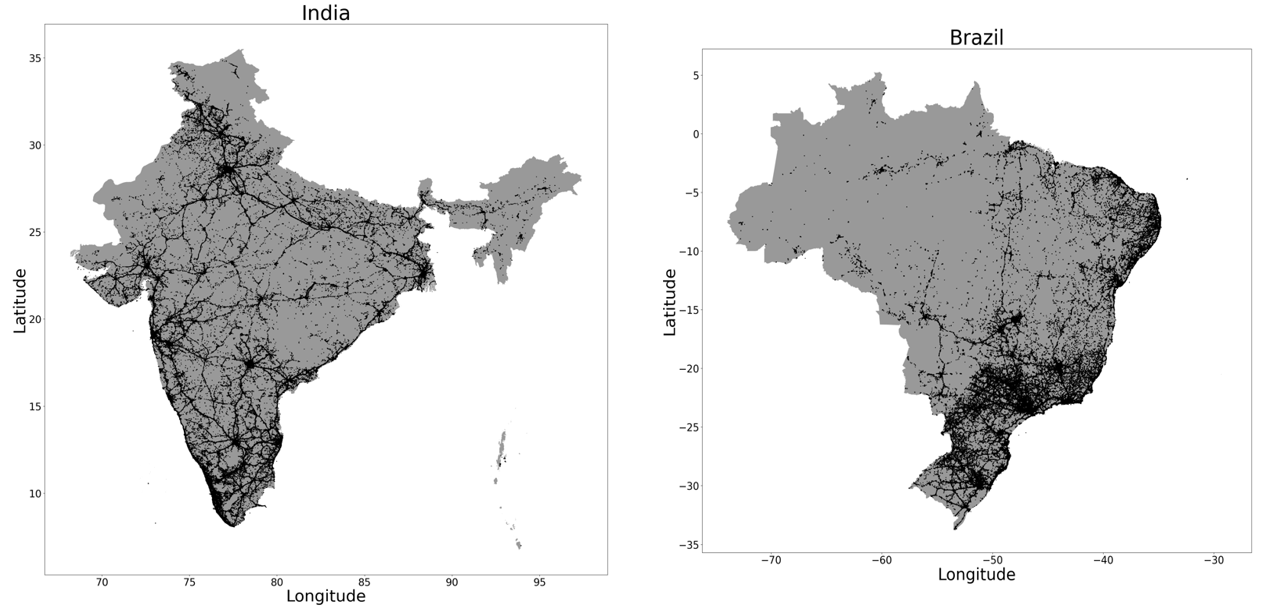}
		\caption{BS locations in India and Brazil. BS locations of Germany and the USA are not shown as those are too dense to make a meaningful visual depiction.}
		\label{fig:joint_vis}
	\end{figure*}
	
	\subsection{Our Approach}
	The gist of the above discussion is that simulation caters to network design needs but demands large computation-time. On the other hand, SG-based models are computationally fast but cannot provide location-specific network performance. In this article,
	we take the middle path and come up with a procedure that can provide location specific network performance despite being easily executable. In other words, our target is to mimic the function of a simulator in a computationally-efficient way.
	
	At a high level, the task of a simulator can be described as a functional mapping that takes a network realization as an input and produces its location-dependent performances (which we collectively define as the \textit{performance manifold}) as an output. In this article, we use a Neural Network (NN) to approximate this mapping. Sufficiently dense NNs with proper architecture and adequate training can approximate any functional mapping with reasonable accuracy. Therefore, it is currently being used as a universal translator between different data types in various areas-from language translation \cite{otter2020survey} to image recognition \cite{xie2020adversarial}. In  wireless communication, NN based Deep Learning models are used for channel prediction \cite{yuan2020machine}, trajectory planning \cite{zhu2021uav} and resource allocation \cite{xu2021experience} among others.

	To the best of our knowledge, this work is the first to explore the concept of NN-based translation of a network topology to its performance manifold. We feed  the BS locations of India, Brazil, Germany, and the USA to our NN as binary images and train it to estimate coverage and rate manifolds of any network located in the countries mentioned above.
	We use the outputs of the simulator as the ground truth for the supervised training. For benchmarking, we compare the performance of our model to that of the fitted SG-based analytical models. Interestingly, we notice that even a simple NN-architecture can improve the coverage and rate manifold prediction errors by a margin of as large as $40\%$ and $25\%$ respectively in comparison to the SG- based models. As an application, we also develop an algorithm that, using the trained NN, determines the locations where new BSs must be deployed in a region so as to achieve a predefined spatially heterogeneous performance goal.
	
	In summary, our key contributions can be listed as follows.
	\begin{itemize}
		\item We construct a Convolutional Neural Network based Auto Encoder (CNN-AE) to predict coverage and rate manifold of a cellular network from its BS locations.
		\item We compare the performance of CNN-AE to SG baseline models and note a dramatic reduction in prediction error.
		\item As an application, in section \ref{network_design}, we develop an algorithm that, using the trained CNN-AE, can efficiently determine the locations of deployable BSs in a brownfield network design problem.
		\item We show that the above algorithm converges in finite time and provide its complexity analysis.
	\end{itemize}

	\subsection{Organization}
	We elaborate CNN-AE and its performance in Section \ref{section_nn} and explain how one can apply it for designing networks in Section \ref{network_design}.
	Although this article primarily targets cellular network and considers coverage and rate manifolds as  performance metrics, we argue in Section \ref{open_question} that extension of our framework to non -cellular scenarios and other performance metrics may also be feasible. We also discuss other extensions/applications of this work that might be interesting to both wireless communication and machine learning communities. Finally, we conclude our article in Section \ref{conclusion}.

	\section{NN Based Prediction}
	\label{section_nn}
	
	In this section, we first describe the datasets that are used in this work, along with the preprocessing steps that are applied on them before being fed to our proposed NN model. 
	Next, we explain our proposed architecture and its training and testing process. Finally, we discuss its performance in comparison to the SG-based models.  
	
	\subsection{Datasets}

	BS location data of India, Brazil, Germany and the USA are collected from the database \href{https://www.opencellid.org/downloads.php}{$\mathrm{www.opencellid.org}$}. We divide each country into square grids such that the size of the smallest square is $L\times L$\footnote{The spherical nature of the Earth must be taken into account while drawing such grids. If the Earth is presumed to be a perfect sphere with radius $R=6371$ km, then at a location with coordinates $(\theta, \phi)$, a small latitude change of $\Delta \theta$ would correspond to a geodesic length of $R\Delta \theta$ while a small longitude change of $\Delta \phi$ would correspond to a geodesic length of $R\cos(\theta)\Delta \phi$. Thus, at $(\theta, \phi)$, a square of size $L\times L$ would correspond to a rectangle with lengths $\Delta \theta = L/R$ and $\Delta\phi = L/R\cos(\theta)$ in the spherical coordinate space as long as $L \ll R$.}. We term each smallest square as a Region-of-Interest (RoI). We choose $L=10$ km for India and Brazil for their relatively lower BS densities and $L=5$ km for Germany and the USA for their relatively higher BS densities. Next, we count the number of BSs in each RoI and discard those RoIs that contain $20$ or less BSs. It enforces an interference-limited environment in each RoI. We also discard RoIs with $400$ or more BSs as those represent less than $1\%$ of the RoI population in each country. The remaining RoIs are used for training and testing of the NN. The (unfiltered) BS locations for India and Brazil can be visualized in Fig. \ref{fig:joint_vis}.
	
	\subsection{Performance Metrics and Ground Truth Generation}
	\label{sec_performance_simulation}
	
	In this paper, our primary target is to estimate the coverage and rate manifolds. However, the same framework can be used for other performance metrics as well. Below we describe how these quantities are calculated in a simulator to be utilised in the supervised learning.
	
	Let, $\mathcal{R}$ be an arbitrary RoI of size $L\times L$ and $\{\mathbf{r}_1,\cdots, \mathbf{r}_n\}$ be the locations of $n$ BSs contained in it. Let, $\mathbf{r}_0$ be a potential user location in $\mathcal{R}$. Any user located at $\mathbf{r}_0$ will receive signals from its nearest BS i.e., if $j_0$ is the index of the BS which the user is connected to, then
	\begin{align*}
		j_0 = \underset{j\in\{1,\cdots, n\}}{\arg\min} |\mathbf{r}_0-\mathbf{r}_j|
	\end{align*}
	
	It is presumed that each BS transmits at  power $P$  and reuses the same spectrum. Therefore the received signal strength at $\mathbf{r}_0$ can be expressed as: $Ph_{j_0}|\mathbf{r}_0-\mathbf{r}_{j_0}|^{-\alpha}$ where $h_{j_0}$ is a random variable indicating small scale fading and $\alpha$ is the path-loss coefficient. Similarly, the aggregate interference power at $\mathbf{r}_0$ generated by other BSs in $\mathcal{R}$ can be expressed as follows. 
	\begin{align}
		I_{\mathcal{R}} = \sum_{j\in\{1,\cdots,n\}\setminus j_0} Ph_{j}|\mathbf{r}_0-\mathbf{r}_j|^{-\alpha}
		\label{eq::interference_in_RoI}
	\end{align}
	where $\{h_j\}_{j\in\{1,\cdots, n\}}$ is a collection of independent and identically distributed random variables denoting small scale fading.
	
	\begin{figure*}[h!]
		\centering
		\includegraphics[width=.9\linewidth]{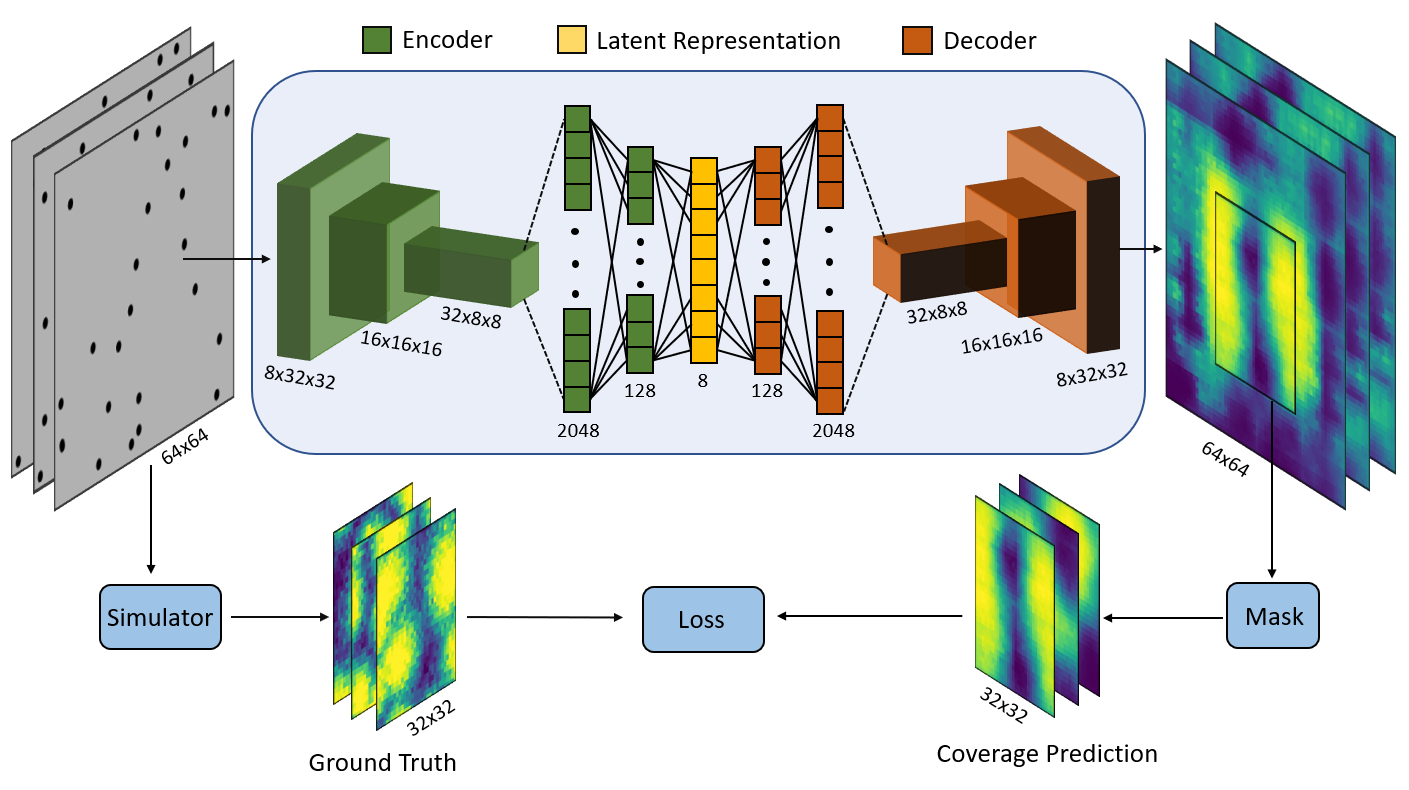}
		\caption{A schematic diagram of the CNN-AE architecture. Each convolution layer in the encoder performs convolution operation with filters of kernel $(3,3)$, stride $2$ and padding $1$. Also, the number of input and output channels of these layers are $(1, 8, 16)$ and $(8, 16, 32)$ respectively. The hyperparameters of the deconvolution filters in the decoder match exactly with their counterparts in the encoder. The hyperparameters of FF networks are depicted in the figure along with the size of the latent representations. The sizes of the output of each convolution layer and that of the input of each deconvolution layer are also shown in the figure. BS locations of the RoIs are fed to the NN as images of size  $64\times 64$. In response, it yields coverage manifolds of the same size. A $32\times32$ subset (corresponding to the RoE) of the generated manifold is then compared to its simulated counterpart. The obtained loss is used to update the NN-parameters using stochastic gradient-descent algorithm.}
		\label{fig:cnn}
	\end{figure*}
	
	We would like to clarify that the BSs that are located outside of $\mathcal{R}$ also contribute to the interference at $\mathbf{r}_0$. However, if $\mathbf{r}_0$ is sufficiently distant from the boundaries of $\mathcal{R}$, then the total interference generated from outside of $\mathcal{R}$ will be significantly less than that generated from inside of $\mathcal{R}$. In this case, $I_{\mathcal{R}}$ can be approximated to be the total interference at $\mathbf{r}_0$ and thus the (approximate) expression of SINR at $\mathbf{r}_0$ can be written as:
	\begin{align}
		\mathrm{SINR}(\mathbf{r}_{0})=\dfrac{h_{j_0}|\mathbf{r}_0-\mathbf{r}_{j_0}|^{-\alpha}}{\sum_{j\in\{1,\cdots,n\}\setminus j_0} h_j |\mathbf{r}_0-\mathbf{r}_j|^{-\alpha} + {\sigma^2}/{P}}
		\label{eq::sinr}	
	\end{align}
	where $\sigma^2$ indicates the power of additive white Gaussian noise (AWGN). The subset of an RoI where this approximation can be applied with reasonable accuracy is termed to be the Region of Evaluation (RoE). In this article, we assume that the RoE of an RoI of size $L\times L$ is a concentric square of size $L/2\times L/2$. The concept of RoE is important because the target of our NN is to approximate the mapping from BS locations to coverage and rate manifolds. To predict those manifolds within an RoI, we thus must restrict ourselves to a subset of RoI where those metrics are completely determined by the BS locations of the concerned RoI. The RoE subset exactly satisfies this criterion. We design the simulator such that, for a given RoI, it generates the coverage/rate manifolds only over its RoE. Other segments of the RoI are discarded as the interference from outside of the concerned RoI in those areas may be significant and therefore can no longer be ignored. The simulated manifolds associated with RoEs serve as the ground truth for training our NN.
	
	Using $(\ref{eq::sinr})$, we can obtain the coverage probability at $\mathbf{r}_0$ as,
	\begin{align}
		\begin{split}
			\mathrm{coverage}(\mathbf{r}_0, \gamma_{\mathrm{th}}) &= \mathbb{P}(\mathrm{SINR}(\mathbf{r}_0)>\gamma_{\mathrm{th}})
			\\
			&= \mathbb{E}\left[\mathbbm{1}\left(\mathrm{SINR}(\mathbf{r}_0)>\gamma_{\mathrm{th}}\right)\right]
		\end{split}	
		\label{eq::coverage}
	\end{align} 
	where $\gamma_{\mathrm{th}}$ is a pre-defined threshold and $\mathbbm{1}(.)$ is the indicator function. The ergodic rate at $\mathbf{r}_0$ can be computed as,
	\begin{align}
		\mathrm{rate}(\mathbf{r}_0) = \mathbb{E}\left[\log_2(1+\mathrm{SINR}(\mathbf{r}_0))\right]
		\label{eq::rate}
	\end{align}
	
	The expectations in $(\ref{eq::coverage}), (\ref{eq::rate})$ are taken over the distributions of fading gains. In simulator, these expectations are calculated via Monte Carlo simulations.

	Notice that the variable $\mathbf{r}_0$ is continuous. For the purpose of computing the coverage and rate manifolds, we must discretise the spatial dimensions i.e., the possible values that $\mathbf{r}_0$ can take. We discretise each RoI and its RoE into $64\times 64$ and $32\times 32$ square-grids respectively. Thus, the BS locations within an RoI and the coverage/rate manifolds over its RoE can be presented as a $64\times 64$ binary image and a $32\times 32$ color-map respectively. The discretisation levels can be increased to gain finer details, however, that comes at the cost of larger computation time.

	\subsection{Proposed NN Architecture}
	\label{section::nn_arcchitecture}

	In this work, we use a Convolutional Neural Network based Auto-Encoder (CNN-AE) to estimate the coverage and the rate manifolds of a network from its BS location data. We shall first describe the CNN-AE architecture for predicting the coverage manifold and later demonstrate how the same NN can be used for predicting the rate manifold with slight modifications in the training data. 
	
	Fig. \ref{fig:cnn} demonstrates the CNN-AE architecture for coverage prediction. Structurally, CNN-AE is segregated into two parts- encoder and decoder. At a high level, the job of an encoder is to generate a low-dimensional latent representation of the input RoI. On the other hand, the goal of a decoder is to generate the coverage manifold of the same RoI from its encoder-generated latent representation.
	
	The BS locations of an arbitrary RoI are fed into the encoder as  a $64\times 64$ binary image. It then passes through three convolutional layers, each containing a sublayer of convolution filters followed by a rectified linear unit ($\mathrm{ReLU}$) activation function. The output of the third convolutional layer is then straightened and fed into a fully connected feed-forward (FF) network with a hidden layer. The output of the FF network acts as the latent representation of the given RoI. All hyperparameters of CNN-AE are provided in Fig. \ref{fig:cnn}. \hspace{2cm}
	
	The structure of decoder is the exact opposite of that of the encoder. The latent representation is first fed into a FF network with one hidden layer. Its output is then arranged into a square grid and fed to a cascade of three deconvolutional layers, each containing deconvolution filters followed by a $\mathrm{ReLU}$ function. The last deconvolution layer also contains a $\mathrm{sigmoid}$ function to ensure that every $64\times 64$  points of its generated output lies between $0$ and $1$. This output is finally passed through a mask to yield a $32\times 32$ manifold corresponding to the RoE of the input.
	
	The NN shown in Fig. \ref{fig:cnn} can also be utilised to estimate the rate manifold. However, as each point of the output manifold lies in $[0,1]$, we must scale the rate appropriately. 
	
	\begin{figure*}[h!]
		\centering
		\begin{subfigure}{0.48\textwidth}
			\centering
			\includegraphics[width=\linewidth]{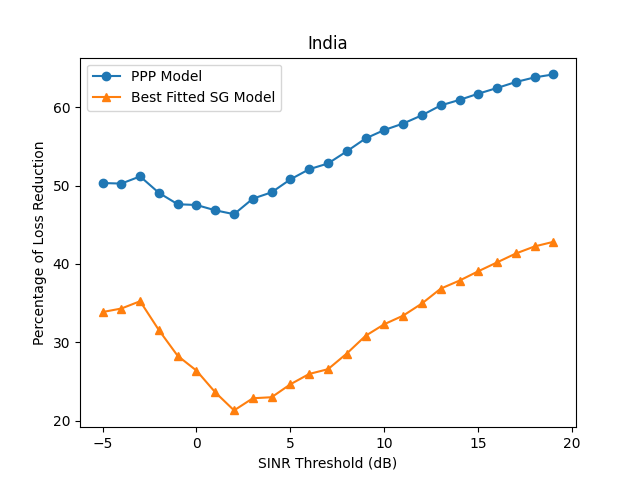}
			\label{subfig:err_red_india}
		\end{subfigure}
		\begin{subfigure}{0.48\textwidth}
			\centering
			\includegraphics[width=\linewidth]{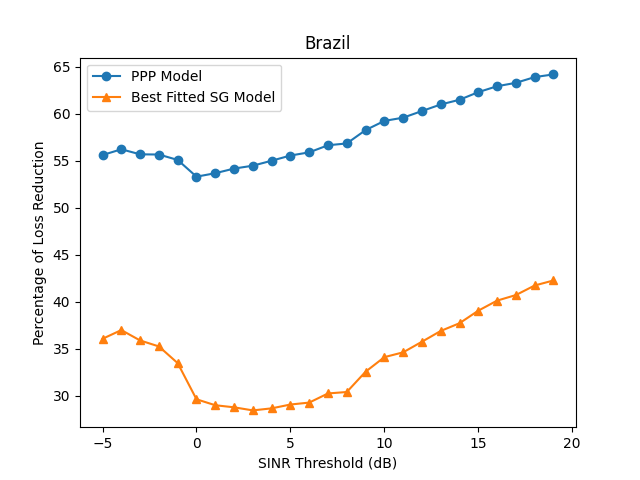}
			\label{subfig:err_red_brazil}
		\end{subfigure}
		\begin{subfigure}{0.48\textwidth}
			\centering
			\includegraphics[width=\linewidth]{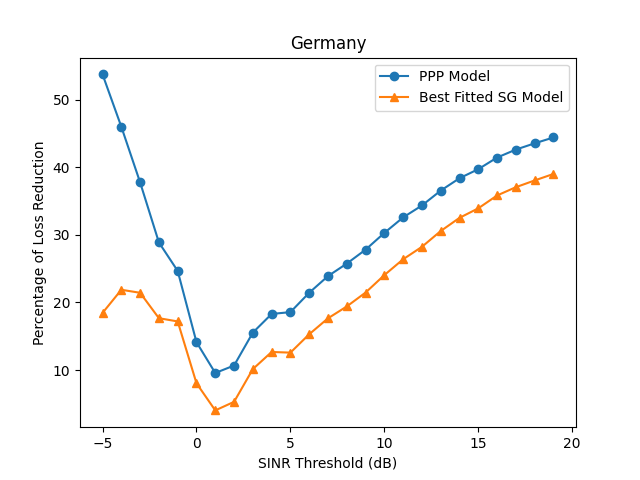}
			\label{subfig:err_red_germany}
		\end{subfigure}
		\begin{subfigure}{0.48\textwidth}
			\centering
			\includegraphics[width=\linewidth]{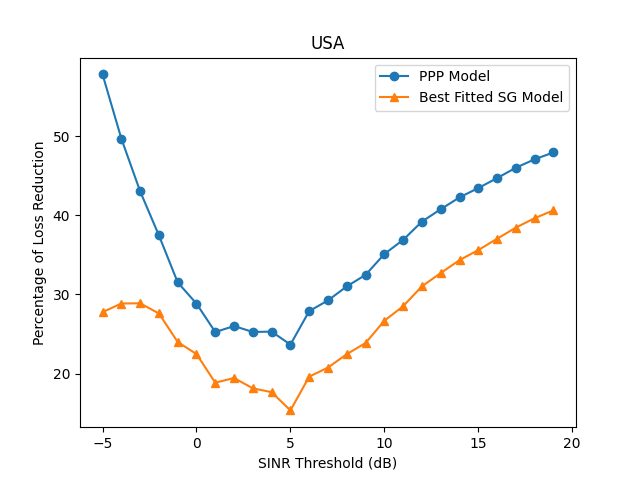}
			\label{subfig:err_red_usa}
		\end{subfigure}
		\caption{The relative error performance of CNN-AE in comparison to the PPP-based model and the best fitted SG-based model in estimating the coverage manifold in an RoE. The path-loss coefficient, $\alpha$, is taken as $4$ and the fading distributions are taken to be exponential with mean $1$ (Rayleigh fading). The results are averaged over $5$ random seed values.}
		\label{fig:err_red_results}
	\end{figure*}
	
	\subsection{Training and Testing Procedures}
	Out of all eligible RoIs within a country, $70\%$ are randomly chosen for training CNN-AE while the rest are used for testing its performance. For each RoI, its CNN-AE generated $32\times 32$ manifold is compared to its simulated counterpart and the loss is obtained via $L_1$ function which is a common loss-function in the image processing literature \cite{liu2020gradnet}. Particularly if $\{X_{i,j}\}_{i,j=1}^{32}$ denotes the simulated rate/coverage manifold corresponding to the RoE of an RoI, and $\{Y_{i,j}\}_{i,j=1}^{32}$ indicates its corresponding NN-based output (after masking), then loss is defined as below.
		\begin{align*}
			\mathrm{loss}\triangleq \sum_{i=1}^{32}\sum_{j=1}^{32}|X_{i,j}-Y_{i,j}|
		\end{align*}

	The computed loss is utilised to update the NN-parameters via a stochastic gradient descent (SGD) algorithm. 
	The testing procedure is similar to the training except for the fact that the NN-parameters are no longer updated.
	
	\subsection{Error Performance}
	
	In this section, we compare the average error/loss generated by our trained NN during the testing phase with that generated by the SG-based models. As clarified earlier, SG-based models cannot provide location-specific performance values. To obtain the error for these models, we first generate a $32\times 32$ manifold for each RoI such that each points in that manifold holds the same average performance value as predicted by the SG-model and then compare it with the simulated ground-truth manifold via $L_1$ function.
	
	We compare the performance of our model to two SG-based models. The first model is Poisson point process (PPP) which is one of the simplest and most widely adapted models in the literature. If the BSs are deployed in an infinite area following a PPP with density $\lambda$, then the expression of average coverage probability and rate can be obtained following Theorem 1 and 3 in  \cite{andrews2011tractable}. If the BS locations of an RoI are approximated as a realization of a PPP, then to compute the average performances, we must first estimate the BS density $\lambda$ using the estimator $(\ref{eq::lambda_estimator})$ and then plug it into their respective expressions. \hspace{1.5cm}
	\begin{align}
		\hat{\lambda} = \dfrac{\# \mathrm{~BSs ~in ~RoI}}{\mathrm{Area ~of ~RoI}}
		\label{eq::lambda_estimator}
	\end{align}
	
	\begin{figure*}[t]
		\centering
		\begin{subfigure}{0.48\textwidth}
			\centering
			\includegraphics[width=\linewidth]{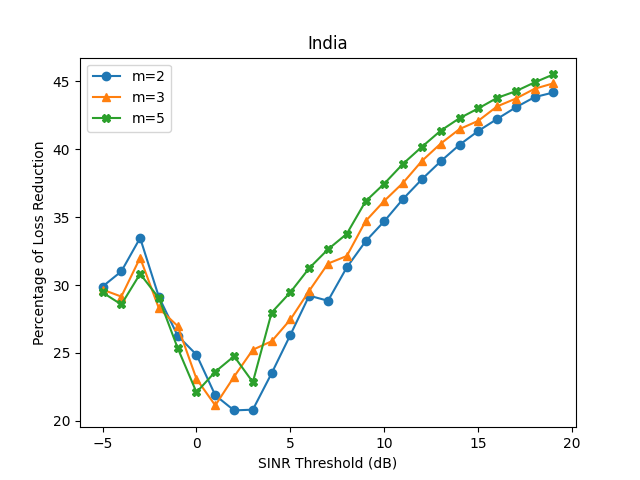}
			\label{subfig:nakagami_india}
		\end{subfigure}
		\begin{subfigure}{0.48\textwidth}
			\centering
			\includegraphics[width=\linewidth]{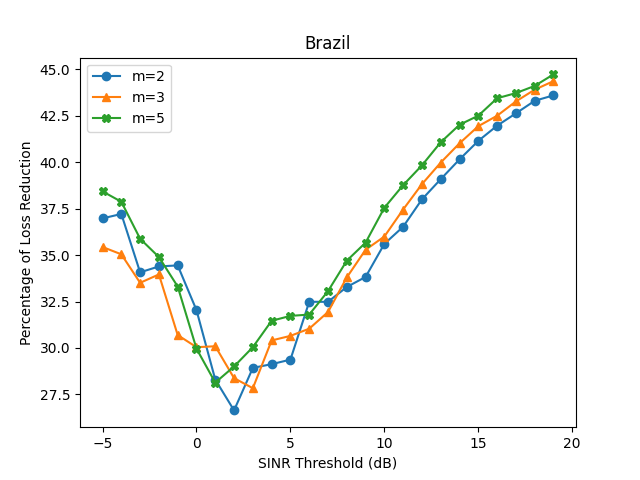}
			\label{subfig:nakagami_brazil}
		\end{subfigure}
		\begin{subfigure}{0.48\textwidth}
			\centering
			\includegraphics[width=\linewidth]{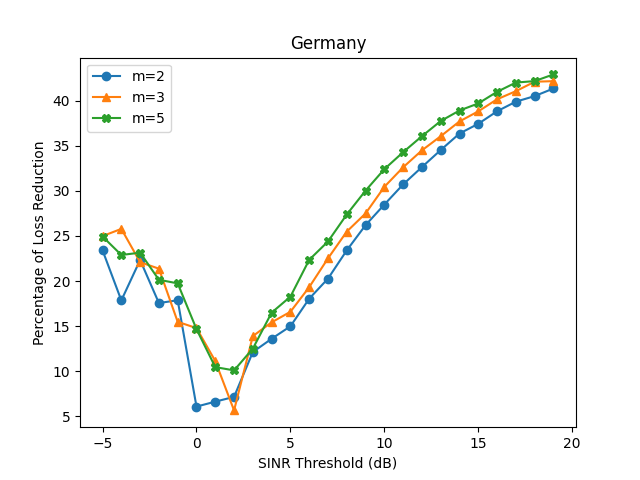}
			\label{subfig:nakagami_germany}
		\end{subfigure}
		\begin{subfigure}{0.48\textwidth}
			\centering
			\includegraphics[width=\linewidth]{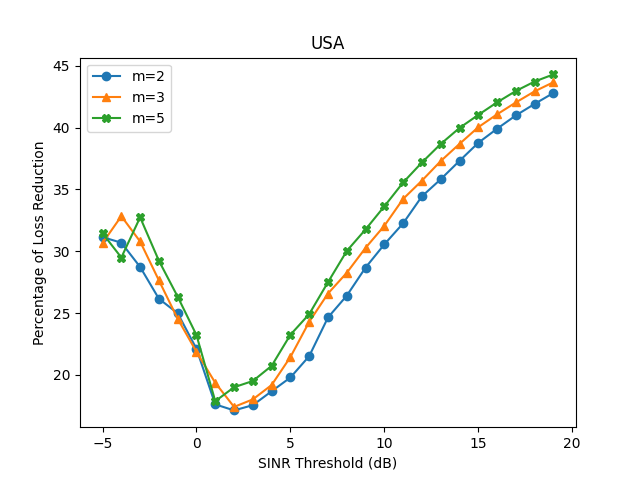}
			\label{subfig:nakagami_usa}
		\end{subfigure}
		\caption{The relative error performance of CNN-AE in comparison to the best fitted SG-based model in estimating the coverage manifold in an RoE. The channel gains are Gamma distributed (Nakagami fading) with shape parameter $m=2,3,5$ and scale parameter $1$. The pathloss coefficient $\alpha$ is taken to be $4$. The results are averaged over $5$ random seed values.}
		\label{fig:nakagami_results}
	\end{figure*}
	
	We denote the second SG-based model as the best-fitted SG model. In this case, we presume that, in each RoI, the average performance value predicted by the model is exactly the same as the empirical spatial average of the simulated performance manifold. In other words, we assume that the best-fitted model can predict the ground-truth average performance value with zero error. Clearly, the loss generated by this model is a lower bound to the loss performance of any other SG-based models.

	Fig. \ref{fig:err_red_results} describes the relative performance of our NN-model in predicting the coverage manifold in comparison to the two SG-based models stated above over a wide range of the SINR threshold, $\gamma_{\mathrm{th}}$.
	In particular, for a given  $\gamma_{\mathrm{th}}$, we mathematically define the relative performance/loss reduction of our NN-based model in comparison to the PPP-based model as follows. \hspace{0.5cm}
		\begin{align*}
			\mathrm{loss~reduction}(\mathrm{PPP}) = \dfrac{\mathrm{loss}(\mathrm{PPP})-\mathrm{loss}(\mathrm{NN})}{\mathrm{loss}(\mathrm{PPP})}\times 100\%
		\end{align*}
		where $\mathrm{loss}(\mathrm{PPP})$, $\mathrm{loss}(\mathrm{NN})$  indicate the losses corresponding to the PPP and the NN-based outputs respectively. The relative performance of the NN-based model in comparison to the best-fitted SG-based model is defined similarly.
	
	We observe that, in each country, the CNN-AE based model improves the error performance by a margin of as large as $40\%$ in comparison to the best-fitted model for some values of $\gamma_{\mathrm{th}}$. Moreover, over the range of all $\gamma_{\mathrm{th}}$ values considered in our result, the minimum values of error performance improvement, in comparison to the best-fitted model, are approximately $21\%$, $28\%$, $4\%$ and $15\%$ for India, Brazil, Germany and the USA respectively\footnote{The python code for generating these results are openly available at: \href{https://github.itap.purdue.edu/Clan-labs/CoverageRate_via_CNN_AE}{https://github.itap.purdue.edu/Clan-labs/CoverageRate\_via\_CNN\_AE}}.
	
	The improvements in comparison to the PPP-model are even more dramatic. In this scenario, over all $\gamma_{\mathrm{th}}$ values considered in our experiment, the minimum improvement values for India, Brazil, Germany and the USA are $46\%$, $53\%$, $10\%$, and $23\%$, respectively while the peak improvement values are $65\%$, $65\%$, $53\%$, and $57\%$, respectively.

	Interestingly, the performance improvement values are high at both ends of the $\gamma_{\mathrm{th}}$ spectrum and low in the middle. It can be explained as follows. At the ends, the coverage values at all the potential user locations are close to either $1$ (low end) or $0$ (high end) and hence present little uncertainty to the prediction algorithm. In the middle, however, coverage values are almost equidistant from both of its bounds and thereby present high uncertainty. Thus, predictive CNN-AE performs relatively better at the ends than in the middle. \hspace{4.0cm}
	
	In generating the above results, the BS to user channel gains are presumed to be Rayleigh faded. In Fig. \ref{fig:nakagami_results}, we consider the channel fading distribution to be Nakagami$-m$ that subsumes the Rayleigh distribution as its special case \cite{ibrahim2021exact}. Interestingly, we observe that, the relative error performance of the CNN-AE architecture in comparison to the best-fitted SG-based model in predicting the coverage manifold does not change significantly with change in $m$. In other words, the improvements are robust across a wide range of  fading environments. 
	
	Finally, in Table \ref{table:rate_loss_reduction}, we enlist the relative performance of the CNN-AE in predicting the rate manifold in comparison to the best-fitted SG-based model. For various fading parameters $m$, we see that the error improvement varies between $18\%-22\%$, $19\%-25\%$, $15\%-15\%$, $16\%-17\%$ for India, Brazil, Germany and the USA respectively. This is consistent with our previous	results (Fig. \ref{fig:nakagami_results}) that the estimation error improvement of CNN-AE is robust against changes in the fading environment. 
	
	\section{Application in Network Design}
	\label{network_design}
	
	In this section, we describe how a trained CNN-AE can be used to optimally place new BSs in an RoI while obeying some spatially heterogeneous coverage requirements within its RoE. We would like to clarify that, although our discussion shall be confined to the coverage-driven network design, it can also be extended to rate-driven designs as well. \hspace{3cm}
	
	\begin{table}
		\centering
		\begin{tabular}{|c|c|c|c|c|}
			\hline
			$m$ &India & Brazil & Germany & USA\\
			\hline
			1 & $22.44\%$ & $25.29\%$ & $15.44\%$ & $16.98\%$\\
			\hline 
			2 &   $20.15\%$        &   $23.58\%$        &    $15.85\%$       &  $17.83\%$        \\
			\hline
			3 &   $18.25\%$        &   $22.85\%$        &    $15.47\%$       &     $17.23\%$     \\
			\hline
			5 &   $18.27\%$        &   $19.06\%$        &   $15.45\%$        &      $18.24\%$    \\
			\hline
		\end{tabular}
		\caption{The reduction of error of CNN-AE in comparison to the best fitted SG-based model in estimating the rate manifolds in RoEs. The channel gains are taken to be Gamma distributed (Nakagami fading) with shape parameters $m\in\{1, 2, 3, 5\}$ and scale parameter $1$ while the pathloss coefficient, $\alpha$, is presumed to be $4$. The results are averaged over $5$ seed values. }
		\label{table:rate_loss_reduction}
	\end{table}
	
	To better understand our objective, consider the situation of a network designer who is faced with the problem of deploying new BSs in an RoI which can be segregated into a high demand area where the coverage must be provided at least $90\%$ of the time (i.e., the coverage probability must exceed $0.9$) and a low demand area where it is sufficient to provide coverage $80\%$ of the time.  The designer is satisfied with a deployment solution if these constraints are met at at least $95\%$ of the user locations. Algorithm \ref{alg:alg_1} describes a method to solve such questions. More specifically, its goal is to provide solution to the problem of optimally deploying new BSs in an RoI, alongside the existing ones (brownfield deployment), such that the coverage manifold generated within the RoE by the new network satisfies a pre-defined location-specific constraint at at least a certain fraction of the potential user locations.

		The algorithm initiates with one new BS (for loop initiation at line \ref{for_loop_4}). Notice that the RoI is discretised into a $N\times N$ grid\footnote{In section \ref{sec_performance_simulation}, we have taken $N=64$.}. Our first task is to check whether the BS in consideration can be deployed at one of the $N^2$  locations such that its associated coverage manifold exceeds certain  location-specific thresholds at at least a certain fraction of user locations. The first part of the job is in handled by the subroutine $\mathrm{CyclicOpt}$. Specifically, the $\mathrm{for}$ loop at line \ref{for_25} loops the potential location of this BS all over the $N\times N$ grid. For each location, its associated coverage manifold is obtained by feeding the network topology (location of BS) to a trained NN (line \ref{line_28}). We then calculate the fraction
		of potential user locations where the computed coverage values exceed given location specific thresholds (line \ref{line_29},). Subroutine $\mathrm{CyclicOpt}$ returns the maximum value of the fraction (denoted by $\mathrm{CycleMaxFrac}$) calculated over all possible $N^2$ locations. In line \ref{line_14}, we compare the returned fraction value with a given threshold $\mathrm{FracTh}$. If it does not exceed $\mathrm{FracTh}$, we increment the number of deployable BSs by one (progress in the $\mathrm{for}$ loop at line $\ref{for_loop_4}$) and move to the next stage. Otherwise, the algorithm terminates and returns a  favourable deployment solution (line \ref{line_18}).

	\begin{algorithm}[ht]
		\caption{$\mathrm{Brownfield ~Network~ Design~in~an~RoI}$}
		\begin{algorithmic}[1]
			\State \textbf{Inputs:} 
			
			$\mathrm{OldBSLoc}$ \Comment{Existing BS Locations in RoI} 
			
			$\mathrm{CNN-AE}$ \Comment{Trained NN for a given SINR Threshold}
			
			$N$ \Comment{Spatial discretization levels of RoI}
			
			$X_{\max}, Y_{\max}$ \Comment{Dimensions of RoI}
			
			$\mathrm{MaxBS}$ \Comment{Maximum BSs that can be deployed}
			
			$\mathrm{CovTh}$  \Comment{Location Specific Coverage Threshold}
			
			$\mathrm{FracTh}$ \Comment{Minimum fraction of locations in RoE that must have coverage above their specified threshold} 
			
			\vspace{0.2cm}
			
			\State $\mathrm{MaxFrac}\gets 0$
			\State $(\Delta x , \Delta y) \gets (X_{\max}/N, Y_{\max}/N)$ \Comment{Spatial resolutions}
			\vspace{0.2cm}
			\For{$\mathrm{NumBS}\in\{1, \cdots, \mathrm{MaxBS}\}$}
			\label{for_loop_4}
			\State $\mathrm{NewBSLoc}\gets$ $\mathrm{NumBS}$ random locations in RoI.
			\vspace{0.1cm}
			\While{$\mathrm{True}$}
			\label{while-loop}
			\State $\mathrm{CycleMaxFrac}, \mathrm{CycleOptLoc}$ $\gets$ $\mathrm{CyclicOpt}( \hspace{0.05cm})$
			\vspace{0.1cm}
			\If{$\mathrm{CycleMaxFrac}>\mathrm{MaxFrac}$}
			\State $\mathrm{MaxFrac}\gets \mathrm{CycleMaxFrac}$
			\State $\mathrm{NewBSLoc}\gets \mathrm{CycleOptLoc}$
			\State $\mathrm{OptNewBSLoc}\gets \mathrm{CycleOptLoc}$
			\Else
			\State Break while loop
			\label{line_13}
			\EndIf
			
			\If{$\mathrm{MaxFrac}\geq\mathrm{FracTh}$}
			\label{line_14}
			\State Break while loop
			\State Break for loop
			\EndIf
			\EndWhile
			\EndFor
			\If{$\mathrm{MaxFrac}<\mathrm{FracTh}$}
			\label{line_17}
			\State $\mathrm{OptNewBSLoc}\gets\mathrm{NONE}$
			
			\EndIf
			\State \textbf{Output:} $\mathrm{OptNewBSLoc}$ 
			\label{line_18}
			\vspace{0.3cm}

			\Procedure{$\mathrm{CyclicOpt}$}{ }
			\State $\mathrm{CycleMaxFrac}\gets 0$
			\State $\mathrm{CycleOptLoc}\gets \mathrm{NewBSLoc}$
			\State $\mathrm{Grid}\gets\{0,\Delta x, \cdots, X_{\max}\}\times\{0,\Delta y, \cdots, Y_{\max}\}$
			
			\vspace{0.1cm}
			\For{$j\in\{0,\cdots, \mathrm{NumBS}-1\}$}
			\State $\mathrm{tempLoc}\gets \mathrm{CycleOptLoc}$
			\For{$(x, y)\in\mathrm{Grid}$}
			\label{for_25}
			\State $\mathrm{tempLoc}[j]\gets (x, y)$
			\State $\mathrm{Topology}\gets \mathrm{OldBSLoc}\cup\mathrm{tempLoc}$
			\State $\mathrm{CovManifold}\gets \mathrm{CNN-AE}(\mathrm{Topology})$
			\label{line_28}
			\State $\mathrm{Frac}\gets \mathrm{mean}(\mathrm{CovManifold}>\mathrm{CovTh})$
			\label{line_29}
			\vspace{0.1cm}
			\If{$\mathrm{Frac}>\mathrm{CycleMaxFrac}$}
			\State $\mathrm{CycleMaxFrac}\gets\mathrm{Frac}$
			\State $\mathrm{CycleOptLoc}\gets \mathrm{tempLoc}$ \vspace{0.2cm}
			\EndIf
			\EndFor
			\EndFor
			\State	\Return $\mathrm{CycleMaxFrac}$, $\mathrm{CycleOptLoc}$
			\EndProcedure
		\end{algorithmic}
		
		\label{alg:alg_1}
	\end{algorithm}

	\begin{figure*}
		\centering
		\includegraphics[width=0.95\linewidth]{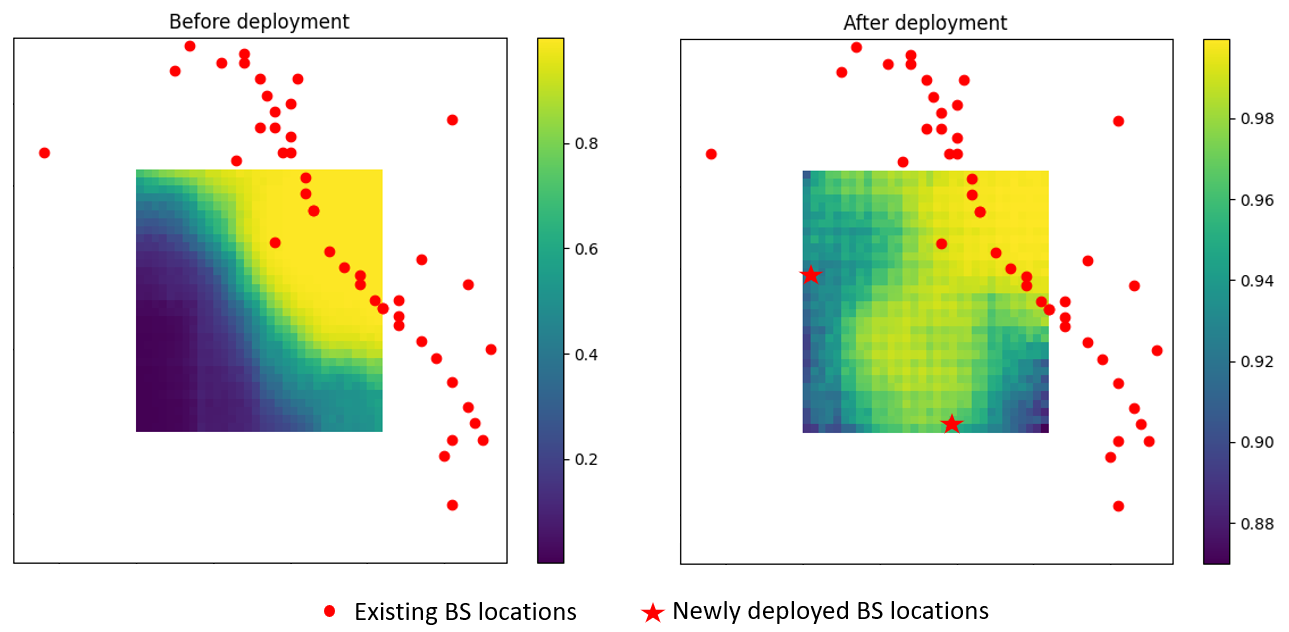}
		\caption{The coverage probability manifolds associated with a random RoI of India before and after deploying some new BSs. The BSs are deployed using Algorithm \ref{alg:alg_1} with the constraint that the coverage probability must exceed $0.9$ at at least $95\%$ of the user locations. The SINR threshold value $\gamma_{\mathrm{th}}$, is taken to be $0$ dB for computing the coverage probabilities. We would like to point out that the scale of colors in the two subfigures are different.}
		\label{fig:before_after_deployment}
	\end{figure*}
	In the next stage, the goal of the algorithm is to find suitable deployment locations of the two new BSs. However, due to the combinatorial nature of the problem, the number of possible scenarios is $\mathcal{O}(N^4)$. In general, for $k$ new BSs, the number of possibilities increases to $\mathcal{O}(N^{2k})$. Thus, unlike the first stage, the exhaustive search can no longer be used. To circumvent the exponential blowup, we now apply the alternate maximization process \cite{mondal2021economic} (while loop at line \ref{while-loop}). The main idea is described as follows. For given locations of new BSs, we first obtain the coverage manifold by feeding the entire topology to a trained CNN-AE (line \ref{line_28}) 
	and calculate the fraction of potential user locations where the generated manifold satisfies a pre-defined location dependent lower bound (line \ref{line_29}). Alternate or Cyclic maximization process maximizes this fraction by unilaterally varying the location of only one deployable BS at a time.
	The BS whose location is to be varied is chosen in a cyclic manner. 
	This cyclic process continues unless at the end of a cycle any one of the following two termination conditions gets validated. 
	The first termination flag triggers if the maximum value of the fraction obtained at the current cycle turns out to be less than that obtained in the previous cycle (line \ref{line_13}). 
	It indicates that the cyclic procedure can no longer improve the fraction value.
	In this case, the number of deployable new BSs is incremented by one and the cycle starts all over again (progress in for loop at line \ref{for_loop_4}).
	The other termination flag activates if the
	value of the fraction obtained at the current cycle exceeds a
	pre-defined threshold (line \ref{line_14}). 
	In this case, the whole program terminates and returns a viable solution.
	
	Note that Algorithm \ref{alg:alg_1} takes $\mathrm{MaxBS}$, the maximum number of deployable BSs, as an input. Therefore, if no viable solution is obtained even after deploying $\mathrm{MaxBS}$ number of new BSs, (line \ref{line_17}) then the program returns $\mathrm{NONE}$  and terminates.

	We would like to mention that the cyclic optimization process ($\mathrm{while}$ loop at line \ref{while-loop}) can potentially steer the Algorithm \ref{alg:alg_1} into an infinite loop. However, in the following theorem, we prove that such a scenario cannot arise and thus the algorithm is guaranteed to terminate within a finite amount of time.
	
	\begin{theorem}
		Algorithm \ref{alg:alg_1} is guaranteed to terminate within a finite number of steps.
	\end{theorem}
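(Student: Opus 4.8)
The plan is to argue termination hierarchically, isolating the single source of potential non-termination — the $\mathrm{while}$ loop at line \ref{while-loop} — and showing it cannot run forever. First I would dispatch the two surrounding finite constructs. The outer $\mathrm{for}$ loop at line \ref{for_loop_4} ranges over $\mathrm{NumBS}\in\{1,\cdots,\mathrm{MaxBS}\}$, so it executes at most $\mathrm{MaxBS}$ passes regardless of what happens inside. The subroutine $\mathrm{CyclicOpt}$ likewise always halts: it is a $\mathrm{for}$ loop over the finitely many BS indices $j\in\{0,\cdots,\mathrm{NumBS}-1\}$ nested with a $\mathrm{for}$ loop over the finite grid $\mathrm{Grid}$ (line \ref{for_25}), which contains $\mathcal{O}(N^2)$ points; each inner step performs one NN evaluation and one fraction computation. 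Hence every call to $\mathrm{CyclicOpt}$ returns after finitely many operations, and the whole question reduces to bounding the number of $\mathrm{while}$ iterations.

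For the $\mathrm{while}$ loop the key structural fact is that $\mathrm{MaxFrac}$, initialized once at line $2$ and never reset, is updated \emph{only} inside the $\mathrm{if}$ branch guarded by $\mathrm{CycleMaxFrac}>\mathrm{MaxFrac}$, where it is replaced by the strictly larger value $\mathrm{CycleMaxFrac}$; in the complementary $\mathrm{else}$ branch the loop breaks at once (line \ref{line_13}). Consequently, any $\mathrm{while}$ iteration that does \emph{not} break strictly increases $\mathrm{MaxFrac}$. The remaining ingredient is that $\mathrm{MaxFrac}$ cannot increase indefinitely, and here I would invoke the spatial discretization of the RoE: the quantity $\mathrm{Frac}=\mathrm{mean}(\mathrm{CovManifold}>\mathrm{CovTh})$ (line \ref{line_29}) is computed over the finite $32\times 32$ grid of user locations, so it is always a rational number $k/M$ with $M$ the fixed number of RoE grid points and $k\in\{0,1,\cdots,M\}$. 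Thus $\mathrm{MaxFrac}$ ranges in the finite set $\{0,1/M,\cdots,1\}$ and is bounded above by $1$.

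Combining these, every strict increase of $\mathrm{MaxFrac}$ is by at least $1/M$ while $\mathrm{MaxFrac}\le 1$, so the total number of strictly-increasing $\mathrm{while}$ iterations over the entire run is at most $M$. Each remaining $\mathrm{while}$ iteration terminates its loop — either through the $\mathrm{else}$ break (line \ref{line_13}) or by crossing $\mathrm{FracTh}$ (line \ref{line_14}) — and at most one such breaking iteration occurs per entry into the loop, i.e. at most $\mathrm{MaxBS}$ times. Hence the total number of $\mathrm{while}$ iterations is at most $M+\mathrm{MaxBS}$, each costing a single finite $\mathrm{CyclicOpt}$ call, and the algorithm halts in finitely many steps.

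I expect the conceptually essential step — and the one I would state most carefully — to be the discreteness argument of the second paragraph, rather than mere monotonicity. Bounded monotonicity alone would only guarantee that $\mathrm{MaxFrac}$ \emph{converges}, not that the loop terminates after finitely many iterations; a hypothetical continuum of attainable fraction values could sustain an infinite strictly increasing sequence such as $1-1/n$. It is precisely the finite user-location grid on the RoE that enforces a uniform gap of $1/M$ between distinct attainable fractions, converting ``strictly increasing and bounded'' into ``finitely many increases.'' I would therefore make the gap $1/M$ explicit, as it is the load-bearing observation in the proof.
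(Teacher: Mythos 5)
Your proof is correct, and it shares the paper's overall skeleton: every non-breaking iteration of the $\mathrm{while}$ loop strictly increases a quantity that can take only finitely many values, so an infinite run is impossible. Where you genuinely diverge is in \emph{how} finiteness of the attainable values is established. The paper argues combinatorially on the input side: with $\mathrm{NumBS}$ deployable BSs on an $N\times N$ grid there are at most $(N^{2})^{\mathrm{NumBS}}$ network configurations, hence finitely many possible outcomes of $\mathrm{CyclicOpt}$, and a strictly increasing sequence drawn from a finite set must be finite. You instead argue on the output side: $\mathrm{Frac}$ is a mean of indicators over the fixed $M=32\times 32$ RoE grid, so every attainable value lies in $\{0,1/M,\dots,1\}$, giving a uniform gap of $1/M$ between distinct values. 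Your route buys something the paper's does not: an explicit, polynomial iteration bound. Since $\mathrm{MaxFrac}$ is initialized once and never reset, your accounting gives at most $M+\mathrm{MaxBS}$ total $\mathrm{while}$ iterations over the whole run, whereas the paper's finiteness certificate is exponential in $\mathrm{NumBS}$ and is left purely qualitative ("finite"). This directly sharpens the paper's own complexity discussion, where the bound $\mathcal{O}(k\,\mathrm{MaxBS}\,N^{2})$ is stated with the caveat that "in theory, $k$ can be large": your argument shows $k\leq M+1$ unconditionally. The paper's argument, conversely, is slightly more robust in that it would survive a change in how $\mathrm{Frac}$ is computed (any deterministic score of the configuration works, not just a grid average); your argument is tied to $\mathrm{Frac}$ being a rational with fixed denominator, a property you correctly identify and justify. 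Your closing remark is also well taken and applies equally to both proofs: bounded monotonicity alone gives only convergence, and it is discreteness — achieved by either route — that converts it into termination.
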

	\begin{proof}
		It is sufficient to establish that the while loop in line $\ref{while-loop}$ always terminates in finite steps. Note that, for a given level of discretization $N$, and $\mathrm{NumBS}$ number of deployable BSs, the number of new network scenarios can be at most $(N^{2})^{\mathrm{NumBS}}$, which is a finite number.  Consequently, the number of distinct outcomes of the subroutine $\mathrm{CyclicOpt}$ must at most be finite. Note that the while loop continues as long as $\mathrm{CycleMaxFrac}$, the first component of the outcome of the process $\mathrm{CyclicOpt}$ exceeds all of its previous occurrences in the while loop. Thus, if the loop never terminates, then the values of  $\mathrm{CycleMaxFrac}$ obtained in different iterations of the while loop must form a strictly increasing sequence. This is an impossibility since only finitely many possibilities  are allowed for the elements of that sequence.
	\end{proof}
	
	The computational complexity of the subroutine $\mathrm{CyclicOpt}$ is $\mathcal{O}(N^2)$ where $N$ denotes the number of spatial discretisation levels in RoI. If $k$ indicates the maximum number of iterations of the $\mathrm{while}$ loop in line $\ref{while-loop}$, then the computational complexity of Algorithm \ref{alg:alg_1} can be written as $\mathcal{O}(k\mathrm{MaxBS}N^2)$ where the term $\mathrm{MaxBS}$ defines the maximum number of deployable BSs. Although, in theory, $k$ can be large, our numerical experiments exhibit that in practice, $k$ is small. Effectively, the complexity of Algorithm \ref{alg:alg_1}, therefore, can be written as $\mathcal{O}(\mathrm{MaxBS}N^2)$.
	
	We would like to point out that there are many heuristic and meta-heuristic algorithms in the literature (see \cite{liu2019efficient} and the references therein) to deploy BSs in a network. Unlike our algorithm, however, they do not use NN to generate the coverage/rate manifolds. Moreover, the objectives and constraints in those articles are different than ours. For example, \cite{liu2019efficient} places BSs to maximize the coverage area of the network.
	
	We apply Algorithm \ref{alg:alg_1} to a randomly chosen RoI of India to deploy some new BSs to elevate the coverage probability to $0.9$ for at least $95\%$ of the user locations in the RoE. Fig. \ref{fig:before_after_deployment} exhibits before and after deployment scenarios.  Similar experiment can also  be performed with RoIs of other countries.

	\section{Open Questions}
	\label{open_question}
	
	Our article lays out the foundation of NN-based translation of network topology to its performance manifold. Interestingly, there are many avenues to extend this work. Below we discuss some of these possibilities.
	
	\subsubsection{Other Network Scenarios} Our article primarily analyses outdoor cellular networks. The channel models are thus chosen accordingly. However, the same concepts can also be extended to indoor networks (e.g., Visible Light Communication \cite{costanzo2021adaptive}), device-to-device communication \cite{khuntia2021bidirectional}, vehicular networks \cite{posner2021federated} etc. Each of these network scenarios present its own challenge in designing appropriate NN because the channel models, the network topology and the potential receiver locations could be drastically different from one another.
	
	\subsubsection{Variants of Cellular Networks} There are many variants of the cellular network itself where our work can be extended. For example, in this paper, the network is taken to be homogeneous, i.e., all BSs are presumed to be identical. Unfortunately, in reality, BSs can differ in height, transmission directionality, signal power etc. Incorporating this network heterogeneity into a unified NN-framework is an important future endeavor. One possible way to tackle the heterogeneity of transmission power could be to represent the BS locations by numbers in $[0, 1]$ that are proportional to the strength of the transmitted signal. It is in stark  contrast with our current approach where each location in the RoI is represented by a binary random variable indicating the presence/absence of a BS at that specific location.
	
	\subsubsection{Performance Metrics} In our work, we take coverage and rate as the network performance metrics. However, in the fifth generation (5G) and beyond networks, many other metrics are deemed important. For example, in ultra reliable low latency communication (URLLC), latency is one of the most important performance criteria \cite{ali2021urllc}. Moreover, in sensor networks where majority of the devices may be battery-driven, power efficiency is an important requirement \cite{ansere2019reliable}. NN-aided prediction of these performance metrics could be an interesting area to explore.
	
	\subsubsection{Architectural Modification} 
	
	We utilise a  NN-architecture that is commonly used for image-to-image mapping. However, many variations of this NN are possible. For example we could increase the depth of encoder-decoder, increase the size of the hidden layers etc. Understanding how the hyperparameters can effect the error performance is essential to design better NNs to improve the accuracy.
	
	\subsubsection{Training Data} We use the output of a simulator as the ground-truth for training the NN. Instead, if country-wide field measurements are used, then it can potentially aid the training process in two ways.  First, it can dramatically reduce the time for computation because a significant portion of it is used for simulating the ground truth. Secondly, the channel model used in the simulation cannot capture the finer details of the reality.

	In summary, our work points towards a myriad of opportunities for both the machine learning community and the wireless network community to contribute.
	
	\section{Conclusion}
	\label{conclusion}
	
	In this article, we design a  CNN-AE to predict the coverage and rate manifolds of a network from its topology. We train our  model by feeding BS location data of India, Brazil, Germany, and the USA. In comparison to the stochastic geometry based baseline model, CNN-AE reduces coverage and rate prediction error by a margin of as large as $40\%$ and $25\%$ respectively. As an application, we show how trained CNN-AE can be used for brownfield network design. We also discuss
	how our approach can be extended to other application areas. \hspace{1.0cm}

	
	\ifCLASSOPTIONcaptionsoff
	\newpage
	\fi
	
	\bibliographystyle{IEEEtran}
	\bibliography{Bib}
 
\begin{IEEEbiography}
[{\includegraphics[width=1in, height=1.25in, clip, keepaspectratio]{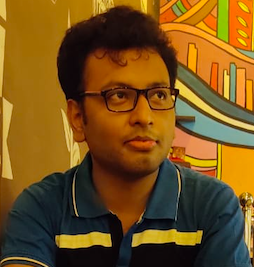}}]
    {Washim Uddin Mondal} is a post-doctoral researcher at Purdue University, USA. He received a Ph. D. from IIT Kharagpur, India under the Prime Minister's Research Fellowship (PMRF) scheme in 2021. He obtained his B. Tech-M.Tech dual degree in ECE from the same institute in 2016. His research interests are in game theory and machine learning. He has recently won the best paper award at the NeurIPS workshop of Cooperative AI, 2021.
\end{IEEEbiography}

\begin{IEEEbiography}
[{\includegraphics[width=1in, height=1.25in, clip, keepaspectratio]{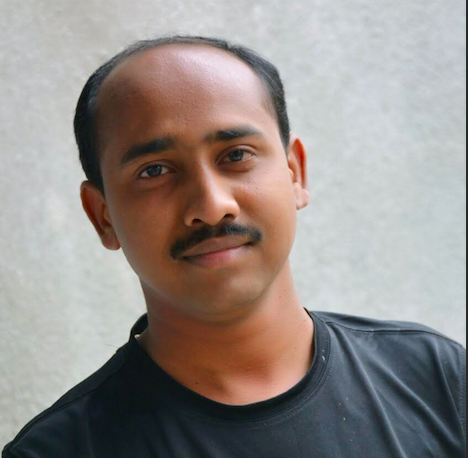}}]
{Praful D. Mankar (Member, IEEE)} received the bachelor's degree in electrical and communication engineering from Amravati University, India, in 2006. He has done a master's in telecommunications systems engineering and a Ph.D. in wireless networks from the Indian Institute of Technology (IIT) Kharagpur, India, in 2009 and 2016, respectively. He also worked as a Post-doctoral Research Associate with Wireless@Virginia Tech Research Group at Virginia Tech, Blacksburg, VA, USA, from 2017 to 2019. Since 2020, he has been working as an Assistant Professor in the Signal Processing and Communication Research Center at the International Institute of Information Technology Hyderabad (IIIT-H), India. His research interests are wireless networks, stochastic geometry, reflecting intelligent surfaces, and the age of information.
\end{IEEEbiography}

\begin{IEEEbiography}
[{\includegraphics[width=1in, height=1.25in, clip, keepaspectratio]{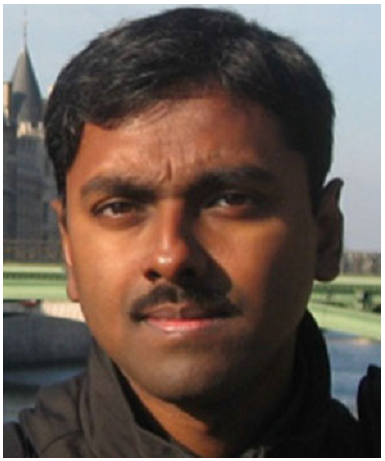}}]
{Goutam Das} 
received the Ph.D. degree from the University of Melbourne, Australia, in 2008. He was a Post-Doctoral Fellow with Ghent University, Belgium, from 2009 to 2011. He is currently as an Assistant Professor with the Indian Institute of Technology Kharagpur. His research interests include optical access networks, radio over fiber technology, optical packet-switched networks, media access protocol design, game theory, and economic analysis of access networks. He has served as a member of the organizing committee of IEEE ANTS since 2011.
\end{IEEEbiography}

\begin{IEEEbiography}[{\includegraphics[width=1in,height=1.25in,clip,keepaspectratio]{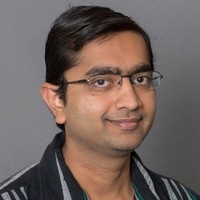}}]{Vaneet Aggarwal (S'08 - M'11 - SM'15)}
received the B.Tech. degree in 2005 from the Indian Institute of Technology, Kanpur, India, and the M.A. and Ph.D. degrees in 2007 and 2010, respectively from Princeton University, Princeton, NJ, USA, all in Electrical Engineering.

He is currently a Full Professor at Purdue University, West Lafayette, IN, where he has
been since Jan 2015. He was a Senior Member of
Technical Staff Research at AT\&T Labs-Research,
NJ (2010-2014), Adjunct Assistant Professor at
Columbia University, NY (2013-2014),  VAJRA Adjunct Professor at IISc
Bangalore (2018-2019), and is currently a Visiting Professor at KAUST, Saudi Arabia. His current research interests are in machine learning and its applications.

Dr. Aggarwal was the recipient of Princeton University’s Porter Ogden Jacobus Honorific Fellowship in 2009, 2017 Jack Neubauer Memorial Award
recognizing the Best Systems Paper published in the IEEE Transactions on
Vehicular Technology,  2018 Infocom Workshop Best Paper Award, and 2021 NeurIPS Cooperative AI Workshop Best Paper Award. He was on the Editorial Board of IEEE Transactions on Green Communications and Networking (2017-2020) and the IEEE Transactions on Communications (2016-2021), and is currently on the Editorial Board of the IEEE/ACM Transactions on Networking (2019-current), and the founding co-Editor-in-Chief of the ACM Journal on Transportation Systems (2022-current). 
\end{IEEEbiography}

\begin{IEEEbiography}
[{\includegraphics[width=1in, height=1.25in, clip, keepaspectratio]{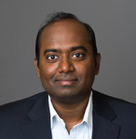}}]
{Satish V. Ukkusuri} is a Reilly Professor in the Lyles School of Civil Engineering at Purdue University and Director of the Urban Mobility Networks and Intelligence (UMNI) Lab. His research is in the area of interdisciplinary transportation networks with current interests in data-driven mobility solutions, disaster management, the resilience of interdependent networks, connected and autonomous traffic systems, shared mobility platforms, and smart logistics. He is a University Faculty Scholar (2017-present), ASCE Fellow, Fulbright Fellow, a selectee of the National Academy of Engineering (NAE) JAFOE conference (2016), a selectee of the National Academy of Science (NAS) Arab American Frontiers of Science, Engineering and Medicine in 2017 and a CUTC/ARTBA Faculty Award (2011) among other awards. He has published more than 350 peer-reviewed papers and is the Editor in Chief of the Journal of Big Data Analytics in Transportation and co-Editor in Chief of ACM Journal of Autonomous Transportation.  
\end{IEEEbiography}

\end{document}